\author{Wouter Kuijper, Victor Ermolaev, and Olivier Devillers}
\title{Celestial Walk: A Terminating Oblivious Walk for Convex Subdivisions}
\definecolor{BisectorPurple}{rgb}{0.68,0.42,0.83}
\definecolor{Grey}{rgb}{0.6,0.6,0.6}
\definecolor{OliveGreen}{rgb}{0.2,0.5,0.2}
\definecolor{VictorOrange}{rgb}{1,0.5,0.3}
\definecolor{KuijperCrimson}{rgb}{.5,0,0}
\newtheorem{theorem}{Theorem}
\newcommand{\ms}[1]{{\mathsf{#1}}}
\newcommand{\mc}[1]{{\mathcal{#1}}}
\newcommand{\mb}[1]{{\mathbb{#1}}}
\newcommand{\PSLG}{PSLG\xspace}
\newcommand{\PSLGs}{PSLGs\xspace}
\newcommand{\Next}[1]{{\ms{next}(#1)}}
\newcommand{\Twin}[1]{{\ms{twin}(#1)}}
\newcommand{\Origin}[1]{{\ms{origin}(#1)}}
\newcommand{\Target}[1]{{\ms{target}(#1)}}
\newcommand{\Face}[1]{{\ms{face}(#1)}}
\newcommand{\Dist}[1]{{\ms{cd}(#1)}}
\newcommand{\wideangle}[1]{{\ms{wideangle}(#1)}}
\newcommand{\narrowangle}[1]{{\ms{angle}(#1)}}
\newcommand{\Obtuse}[1]{\ensuremath{\ms{obtuse}(#1)}}
\newcommand{\ApproxBisector}[1]{\ensuremath{\ms{approx\_bisector}(#1)}}
\title{
Celestial Walk: A Terminating Oblivious Walk for Convex Subdivisions
}
\author{
Wouter Kuijper\\Nedap N.V.
\and
Victor Ermolaev\\Nedap N.V.
\and
Olivier Devillers\\Loria, Inria, CNRS, Universit\'e de Lorraine, France.}
\begin{document}

\maketitle

\begin{abstract}
We present a new oblivious walking strategy for convex
  subdivisions. Our walk is faster than the straight walk and more
  generally applicable than the visiblity walk. To prove termination
  of our walk we use a novel monotonically decreasing distance
  measure.
\end{abstract}

\begin{figure}\center
\includegraphics[page=1,width=0.8\textwidth]{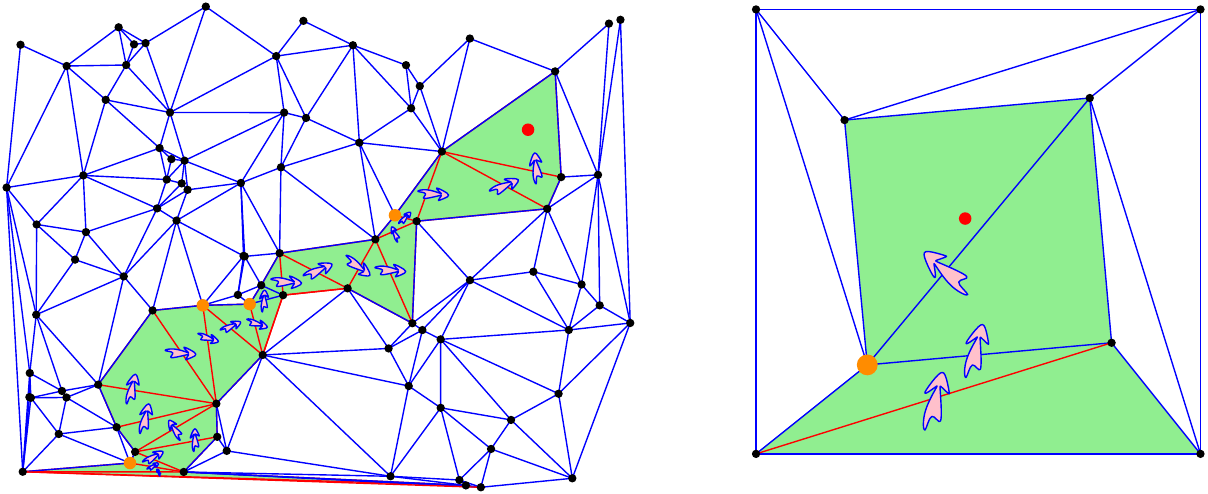}
\label{fig:celestial}
\caption{Celestial walk (obtuse angles marked by orange dots).}
\end{figure}

\section{Introduction}

Point location in a convex subdivision is a classical problem of
computational geometry for which several data structures
have been designed with good complexities in the worst
case~\cite{kirkpatrick1983optimal,chazelle1986fractional,preparata1990planar}.
These intricate solutions are often unused in favor of simpler
algorithms based on traversal of the planar subdivisions using
neighborhood relations between faces, also known as
{\em walking algorithms}~\cite{bose2002online,bose2004online,devillers:inria-00102194}.
These walking algorithms can also be used as a building block
in randomized data structures for point
location~\cite{mucke1999fast,devillers:inria-00166711}.

Amongst convex subdivisions, Delaunay triangulations received a lot of
attention because of their practical importance. For Delaunay
triangulations, essentially two walking strategies are used: the
\emph{straight walk} and the \emph{visibility
  walk}~\cite{devillers:inria-00102194}.  The straight walk visits all
faces crossed by a line segment between a known face and the query
point, while the visibility walk goes from a face to another if the
query point is on the side of the new face with respect to the
supporting line of the edge common to the two faces (cf. Figure~\ref{fig:classics}).

The straight walk trivially terminates in the face containing the
query point and generalizes to any planar subdivision but with the
inconvenience of not being oblivious: the starting face of the walk
must be remembered during the whole walk.

The visibility walk is oblivious, but proving its termination requires
the use of particular properties of the Delaunay triangulation, and
actually the visibility walk may loop in other
subdivisions~\cite{devillers:inria-00102194} (cf. Figure~\ref{fig:loop}).

\begin{figure}[t]
\includegraphics[page=2,width=1\textwidth]{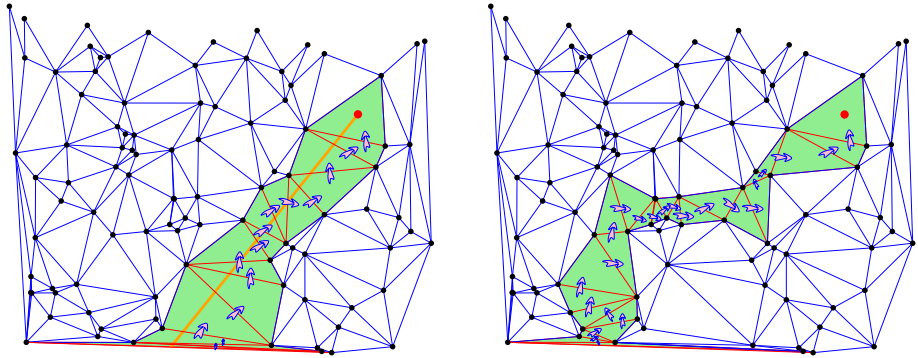}
     \caption{
Straight (left) and visibility (right) walks
        \label{fig:classics}
     }
 \end{figure}

\begin{figure}[b]
\begin{center}
\includegraphics[page=3,width=0.3\textwidth]{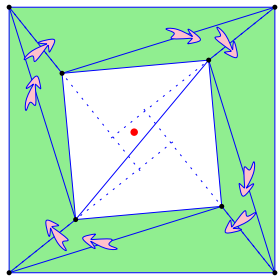}
\end{center}
\caption{
Visibility walk may loop
        \label{fig:loop}
     }
 \end{figure}

Regarding performance, both walks may visit all triangles in the worst
case and visit $O(\sqrt{n})$ triangles when the points are evenly
distributed~\cite{devroye2004expected,devillers:hal-01348831}.  From a
practical point of view, the visibility walk is simpler to implement
and a bit faster in practice because it uses less predicates to walk
from a triangle to its neighbor~\cite{devillers:hal-00850561}.

\subsection*{Contribution}

We propose \emph{celestial distance}\footnote{The name \emph{celestial
    distance} refers to the practice of \emph{celestial navigation}
  where angular distances between the celestial bodies and the horizon
  are used for navigation at sea.}  as a new way to measure the
proximity between an edge of the subdivision and a query point.  This
distance measure allows to design new walking strategies and prove
their termination.  By design these strategies are oblivious: only the
current edge of the current face determines the edges by which the
walk progresses to the next face.

Our main contribution is the \emph{celestial walk} which is a
  refinement of the well-known visibility walk that has the additional
  advantage of terminating not just on Delaunay triangulations but on
  arbitrary convex subdivisions. This is particularly useful for
  constrained and/or incremental meshing where the conditions
  necessary for termination of the visibility walk can be locally
  and/or temporarily violated.

  Another important feature of the celestial walk is that, like the
  visibility walk, it uses only orientation predicates to navigate the
  mesh. In practice, checking an orientation predicate reduces to
  computing the sign of a second degree polynomial, the degree of such
  polynomial being a relevant measure of the predicate
  complexity~\cite{liotta1998robust,boissonnat2000robust}. As a
  consequence, it becomes relatively straightforward to implement the
  celestial walk in an efficient and robust manner.

\section{Pre-requisites \label{s:prerequisites}}

Let $\mc{G} = (V, E, F)$ be a \emph{planar straight line graph}
(\PSLG) consisting of a set of vertices $V$, a set of half--edges $E$,
and a set of faces $F$.

We abstract away the borders of $\mc{G}$ by assuming that it tiles the
entire real plane. At the same time we rule out dense
  tessellations by assuming $\mc{G}$ is \emph{locally finite}
meaning the number of vertices (edges, faces) intersecting a given,
bounded area is always finite.

We assume $\mc{G}$ is given in half-edge representation. In particular
we assume the following atomic functions~\cite{bkos-cgaa-97}:
\begin{description}
\item[$\ms{origin}: E \to V$] which maps every half-edge to its start
  vertex,
\item[$\ms{target}: E \to V$] which maps every half-edge to its end
  vertex,
\item[$\ms{edge} : F \to E$] which maps every face to some edge on its
  perimeter,
\item[$\ms{face} : E \to F$] which maps every half-edge to its
corresponding (left-hand-side) face,
\item[$\ms{next} : E \to E$] which maps every half-edge to its
  successor half-edge in the counter-clockwise winding order of the
  face perimeter,
\item[$\ms{twin} : E \to E$] which maps every half-edge to its
twin half-edge running in the opposite direction, i.e.:
$\Twin{\Twin{e}}= e$,
$\Origin{e}= \Target{\Twin{e}}$ and vice versa.
\end{description}

The \emph{point location problem} in \PSLGs can now be formulated as
follows: given some goal location $p \in \mb{R}^2$ and an initial
half-edge $e_\ms{init} \in E$, find some goal edge $e_\ms{goal} \in E$
such that $p \in \Face{e_\ms{goal}}$ using only $\Next{\cdot}$ and
$\Twin{\cdot}$ to get from one half-edge to the next, i.e.: there must
exist a finite path $e_\ms{init} = e_0 \dots e_n = e_\ms{goal}$ such
that for all $0 < i \le n$ it holds $e_i = \Next{e_{i-1}}$ or $e_i
= \Twin{e_{i-1}}$.

\section{Celestial Distance}

One problem that we encounter when we try to use Euclidean distance as
a measure of progress for a walking algorithm is the fact that
Euclidean distance is not always strictly decreasing for every step in
the walk. The latter means that it is not possible to prove
termination using Euclidean distance alone. For this reason we define
the following augmented distance measure on (half-)edges.

For a given point $p \in \mb{R}^2$ and a half-edge $e \in E$ we define
the \emph{celestial distance of $e$ to $p$} as a pair $\Dist{e, p} =
[d, \alpha]$ where $d$ is the length of the line segment from $p$ to
the closest point on $e$ and $\alpha$ is the wide angle ($\alpha
\ge \frac{\pi}{2}$) between $e$ and this line segment or $0$ in case $d = 0$. We
now define a lexicographic order on celestial distances as follows:
\begin{align*}
[d, \alpha] < [d', \alpha']\text{ iff }d < d' \lor (d = d' \land
\alpha < \alpha')
\end{align*}

We illustrate this new distance measure in
Figure~\ref{fig:voronoi}. The figure shows a convex face and the
resulting partition of the plane obtained by grouping together points
based on their closest edge. This leads to two superimposed Voronoi
partitions: one based on classical Euclidean distance and one based on
celestial distance. The partition based on celestial distance is a
proper refinement of the partition based on Euclidean distance in the
sense that points which were ambiguous under Euclidean distance are
now partitionable under celestial distance.  Note that the points in
the corner-cones radiating outward from the vertices are all ambiguous
under Euclidean distance (since their closest point on the polygon is
the corner vertex which is shared by two edges) yet under celestial
distance they can be further partitioned. In particular, for the
partition based on celestial distance the ambiguous corner-cones are
split according to the angular bisectors.

\begin{figure}[t]
     \begin{center}
         \includegraphics[page=4,width=1\textwidth]{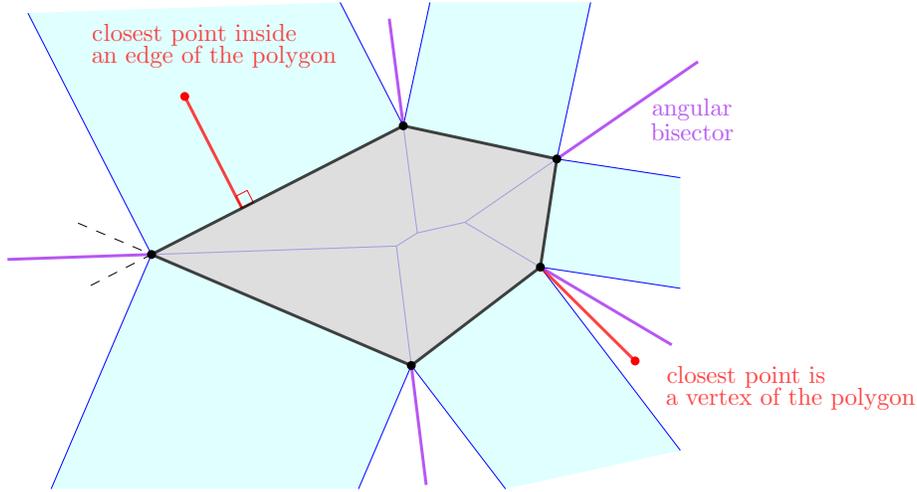}
     \end{center}
     \caption{
Voronoi diagrams of a polygon.
In blue the segment Voronoi diagram of the open edges and the vertices,
in purple the celestial Voronoi diagram of the edges.
        \label{fig:voronoi}
     }
 \end{figure}

\section{Abstract Walk}

With all the pre-requisites and our celestial distance measure in
place we are in a position to present the walking algorithm
properly. We first present an abstract version of the algorithm and
give a correctness proof for this abstract version.

In the next section, we will give an efficient, concrete instantiation
of this abstract version where the computation of the celestial
distances will be completely implicit. However, the correctness of the
final version will rest on the correctness proof of the abstract
version as given in this section.

\begin{algorithm}
\caption{Abstract Walk}\label{algo:abstract}
\begin{algorithmic}[1]
\If {$p$ strictly right of $e$}\label{yy-a}
  \State $e \gets \Twin{e}$\label{yy-b}
\EndIf\label{yy-c}
\State $E' \gets \{ e \}$\label{yy-d}
\While{$E' \ne \varnothing$}\label{yy-e}
  \State $e \gets \ms{select}(E')$\label{yy-f}
  \State $E' \gets \{ \Twin{e'}\ |\ e' \in \Face{e} \land \Dist{e',p} < \Dist{e,p} \land p\text{ strictly right of }e' \}$\label{yy-g}
\EndWhile\label{yy-h}
\State \Return{$\Face{e}$}\label{yy-i}
\end{algorithmic}
\end{algorithm}

The algorithm is rather simple: given a starting edge $e$ and a query
point $p$ to the left of $e$, we select an edge of the $\Face{e}$ that
improves the distance to $p$.
The abstract walk is formalized in Algorithm~\ref{algo:abstract}.

In lines \ref{yy-a}-\ref{yy-c} we bootstrap the invariant that the
query point is always to the left of the current edge.  In line
\ref{yy-d} we bootstrap the current set of successor candidates. In
line \ref{yy-e} we enter the main loop. The loop invariant ensures
that the loop will terminate as soon as there are no more suitable
successor edges that have lower celestial distance to the query
point. In the proof below we will see how this condition is sufficient
to ensure that, at termination, it holds $p \in \Face{e}$. In line
\ref{yy-f} we non-deterministically select one of the candidate
successor edges. In line \ref{yy-g} we compute the next set of
candidate successor edges which are all edges on the current face
perimeter that have the query point on the right and have smaller
celestial distance to the query point than the current edge.

\begin{theorem}\label{th:abstract}
For any planar subdivision, starting edge, and query point
Algorithm~\ref{algo:abstract} terminates with $p \in \Face{e}$.
\end{theorem}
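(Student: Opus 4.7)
The plan decomposes into termination of the \textbf{while} loop and correctness of the returned face.

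For termination, the key observation is that the celestial distance of the current edge is strictly decreasing across iterations. The update rule on line~\ref{yy-g} selects a candidate $e'$ with $\Dist{e', p} < \Dist{e, p}$ and moves to $\Twin{e'}$; since both the Euclidean distance and the wide angle depend only on the underlying line segment, not its orientation, $\Dist{\Twin{e'}, p} = \Dist{e', p} < \Dist{e, p}$. In particular, the Euclidean component is non-increasing, so every half-edge the walk visits lies in the closed disk of radius $d_0$ (the initial Euclidean distance) around $p$. Local finiteness of $\mc{G}$ bounds the number of half-edges in this disk, and strict lexicographic decrease rules out revisits, forcing the loop to halt.

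For correctness, I would carry through the loop the invariant ``$p$ is weakly left of the current edge $e$'', which is bootstrapped by lines~\ref{yy-a}--\ref{yy-c} and preserved by the update (moving to $\Twin{e'}$ puts $p$ strictly left of the new edge whenever $p$ was strictly right of $e'$). The theorem then reduces to the geometric claim that, under this invariant, if $p \notin \overline{\Face{e}}$ then some edge $e^\dagger$ of $\Face{e}$ satisfies both $p$ strictly right of $e^\dagger$ and $\Dist{e^\dagger, p} < \Dist{e, p}$. This contradicts $E' = \varnothing$ on exit and therefore forces $p \in \overline{\Face{e}}$ at termination.

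Proving this geometric claim is the main obstacle. I would let $s$ be the unique closest point of the convex region $\overline{\Face{e}}$ to $p$, lying on some edge $e^\dagger$ whose supporting line separates $p$ from the face, so that $p$ is strictly right of $e^\dagger$. Writing $q$ for the closest point on $e$ to $p$, the inclusion $e \subseteq \overline{\Face{e}}$ gives $|ps| \le |pq|$, strictly unless $s = q$. The invariant ``$p$ weakly left of $e$'' forces this exceptional case to occur only when $s = q$ is a vertex of $\Face{e}$ shared by $e$ and an adjacent edge $e^\dagger$; there the angular component of the celestial distance takes over, and the wide angle between $e^\dagger$ and $pq$ is strictly smaller than that between $e$ and $pq$ because $p$ lies strictly on the $e^\dagger$ side of the angular bisector at $s$---precisely the refinement of the Euclidean Voronoi cells depicted in Figure~\ref{fig:voronoi}. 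The lexicographic comparison then yields $\Dist{e^\dagger, p} < \Dist{e, p}$, completing the plan.
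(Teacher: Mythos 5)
Your proposal matches the paper's high-level structure: termination via local finiteness plus the strictly decreasing celestial distance, and a reduction of correctness to the claim that if $p$ is left of $e$ but $p\notin\Face{e}$ then some perimeter edge $e'$ has $p$ strictly right of $e'$ and $\Dist{e',p}<\Dist{e,p}$. Your termination paragraph is in fact a bit more careful than the paper's, since you explicitly confine the walk to a bounded disk via the non-increasing Euclidean component before invoking local finiteness.

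The gap is in your route to the geometric claim. You take $s$ to be ``the unique closest point of the convex region $\overline{\Face{e}}$'' and reason from the supporting line and the normal cone at $s$. But Theorem~\ref{th:abstract} is stated for \emph{any} planar subdivision, and Section~\ref{s:prerequisites} imposes no convexity; only Algorithm~\ref{algo:celestial} in the next section assumes convex faces. For a non-convex $\Face{e}$ the closest point need not be unique, need not lie on an edge whose supporting line separates $p$ from the face, and the ``normal cone at a vertex'' picture no longer applies, so your argument proves a strictly weaker statement. The paper sidesteps this entirely by working locally with the \emph{current edge}: it sets $e_p$ to the point of $e$ nearest $p$, takes $r$ the ray from $p$ to $e_p$, and splits on whether $r$ meets the interior of $\Face{e}$. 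If it does, an intermediate boundary crossing supplies an edge that is strictly closer in Euclidean distance; if it does not, then $e_p$ is forced to be a vertex of $e$, and $e'=\Next{e}$ lies angularly between $e$ and $r$, so that $\wideangle{e,r}=\narrowangle{e,e'}+\wideangle{e',r}$ gives the strict angular decrease. That argument is valid for arbitrary faces. A secondary issue: in your vertex case you assert ``$p$ lies strictly on the $e^\dagger$ side of the angular bisector at $s$'' by appeal to Figure~\ref{fig:voronoi}; this does follow from the invariant that $p$ is left of $e$, but it needs its own derivation, and the paper's angle-sum identity above is the clean way to obtain it.
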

\begin{proof}
For termination note that, due to local finiteness of the mesh, and
monotonicity of our distance measure, an infinitely descending chain
of celestial distances is ruled out.
It remains to prove that after termination it holds $p \in
\Face{e}$. For this it would suffice to show that for a half-edge $e$
and goal location $p$ such that $p$ is to the left-of $e$ but not in
$\Face{e}$ there always exists another $e'$ in the perimeter of
$\Face{e}$ such that $p$ is strictly on the right of $e'$ and
$\Dist{e',p} < \Dist{e,p}$.  So let $e_p$ be the point on $e$ closest
to $p$ and let $r$ be the ray from $p$ to $e_p$. We make a case
distinction on $r$. First assume that $r$ intersects the interior of
$\Face{e}$. In this case it must hold that $r$ intersects the boundary
of $\Face{e}$ at least one more time in another edge that is closer to
$p$ than the current edge for the Euclidean distance, and thus for our
celestial distance (otherwise $p$ would lie in the current face).
Next assume that $r$ does not intersect the interior of $\Face{e}$, In
this case it must hold that $e_p = \Origin{e}$ or $e_p =
\Target{e}$. Assume, w.l.o.g., that $e_p = \Target{e}$. In this case
we claim that it holds that $e' = \Next{e}$ has strictly smaller
celestial distance to $p$.  Assume, w.l.o.g., that $e'$ provides no
improvement over $e$ w.r.t. Euclidean distance to $p$.  It then
immediately follows that the closest point on $e'$ to $p$ must be the
only point that $e$ and $e'$ share which, by assumption, is $e_p =
\Target{e} = \Origin{e'}$.
Since $r$ does not intersect the boundary of $\Face{e}$ in any point
other than $e_p$ it must hold that $e'$ is in-between $e$ and $r$
(cf. Figure~\ref{fig:proof}).
Now it holds that: $\wideangle{e, r} = \narrowangle{e, e'} +
\wideangle{e', r}$ which implies $\wideangle{e',r} < \wideangle{e,
  r}$. This establishes $\Dist{e',p} < \Dist{e,p}$ and we conclude the
algorithm would progress to $\Twin{e'}$.
\end{proof}

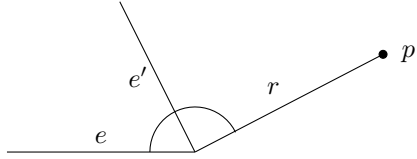
\begin{figure}[t]\centering
  \begin{tikzpicture}
    \node[] at (0, 0)   (eOrigin) {};
    \node[] at (-2.5, 0)  (eTarget) {};
    \node[] at (-1, 2)  (epTarget) {};
    \node[label=right:$p$] at (2.5, 1.3)  (p) {};

    \draw[] (eOrigin.center) -- node[above] {$e$} ++ (eTarget.center);
    \draw[] (eOrigin.center) -- node[left] {$e'$} ++ (epTarget.center);
    \draw[] (eOrigin.center) -- node[above left] {$r$} ++ (p.center);
    \filldraw[black] (2.5, 1.3) circle[radius=1.5pt];
    \draw[domain=26:180] plot ({0.6*cos(\x)}, {0.6*sin(\x)});
  \end{tikzpicture}
  \caption{Illustration for limit case of Theorem~\ref{th:abstract}.}
  \label{fig:proof}
\end{figure}

\section{Celestial Walk}

In the previous section, the successor of an edge in the walk can be the
twin of any edge of the current face with a smaller distance to
the query point.  In this section, we explain a way to actually
select such an edge using only orientation checks.

The main problems that we are solving in this section are the
  facts that Algorithm~\ref{algo:abstract} is non-deterministic and
relies on the explicit computation of celestial distances. Both these
properties make it less immediately applicable. In this section we
therefore develop a derived algorithm that works for \emph{convex}
\PSLGs. As we shall see, the additional assumption of convexity allows
us a significantly more efficient walk.

So let us first consider the problem of determining a successor edge
that has lower celestial distance than our current edge without having
to explicitly compute these distances.

As an example of a convex face consider once more
Figure~\ref{fig:voronoi}.  If we assume the query point is outside the
face this leaves two possibilities. First, the query point may be
located inside one of the orthogonal slabs (indicated in blue in
Figure~\ref{fig:voronoi}) which means the closest point coincides with
the orthogonal projection of the query point on the edge. Second, the
query point may be located in one of the intermediate corner-cones
separating the orthogonal slabs which means the closest point coincides
with the corner vertex.

The latter argument gives us a basic refinement of
Algorithm~\ref{algo:abstract} for the convex case: check if the query
point is in one of the orthogonal slabs, if so, pick the corresponding
edge, else the query point must be inside some corner-cone in-between
two orthogonal slabs, in this case we can pick either edge (unless
one of them is our current edge in which case we are forced to pick
the other alternative in order to make progress).

To do even better we can resolve the remaining non-determinism by
using the angle as a tie-breaker to determine which of the two
candidate edges is best (i.e.: which edge is most favorably oriented
towards the query point). Theoretically the best tie breaker is the
angular bisector (indicated in purple in
Figure~\ref{fig:voronoi}). The only problem is that the explicit
computation and representation of the exact angular bisector is at
least as hard as the explicit computation and representation of
celestial distances.

Fortunately we can avoid the explicit computation of angular bisectors
as well. In particular we make the following case distinction. If the
corner vertex is \emph{not obtuse} (more precisely the face has an
acute or right internal angle at that corner vertex) it holds that the
extensions of both edges lie inside the corner-cone (i.e.: the
leftmost corner in Figure~\ref{fig:voronoi}). This means that we may
use either edge itself as a crude approximation to the angular
bisector (this is precisely what the visibility walk does in
\emph{all} cases). On the other hand, if the corner vertex is
\emph{obtuse} it holds that the normal to the base of the triangle
spanned by the two candidate edges forms a suitable approximation to
the angular bisector (because it lies properly inside the
corner-cone). In Figure~\ref{fig:approx} we illustrate this
construction. Note that, in the limit, as the internal angle
approaches 180 degrees, or, alternatively, the ratio between the
lengths of the two edges approaches 1, the base normal converges to
the exact angular bisector.

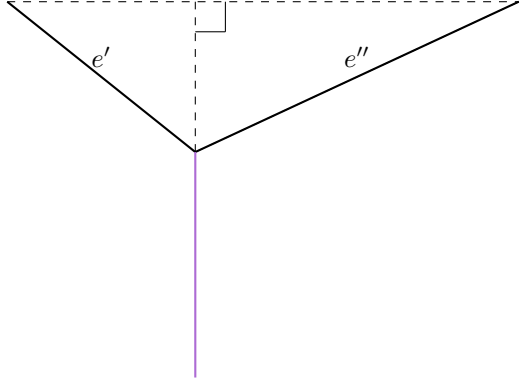
\begin{figure}\centering
  \begin{tikzpicture}
    \node[] at (0, 0)   (o) {};
    \node[] at (-2.5, 2)  (ep) {};
    \node[] at (4.3, 2)  (epp) {};
    \node[] at (0, 2) (s) {};
    \node[] at (0, -3) (e) {};

    \def \dx {0.4};
    \node[] at (\dx, 2) (a1) {};
    \node[] at (\dx, 2-\dx) (a2) {};
    \node[] at (0, 2-\dx) (a3) {};

    \draw[thick] (o.center) -- node[above] {$e'$} ++ (ep.center);
    \draw[thick] (o.center) -- node[above] {$e''$} ++ (epp.center);
    \draw[dashed] (ep.center) -- (epp.center);
    \draw[dashed] (o.center) -- (epp.center);
    \draw[dashed] (o.center) -- (s.center);
    \draw[thick,BisectorPurple] (o.center) -- (e.center);

    \draw[] (a1.center) -- (a2.center);
    \draw[] (a2.center) -- (a3.center);
  \end{tikzpicture}
  \caption{Construction for approximating angular bisector.}\label{fig:approx}
\end{figure}

\begin{figure}\centering
  \begin{tikzpicture}
    \def\xEp {-4}
    \def\yEp {2}
    \def\xEpp {3}
    \def\yEpp {2}

    \node[] at (0, 0)   (o) {};
    \node[] at (\xEp, \yEp)  (ep) {};
    \node[] at (\xEpp, \yEpp)  (epp) {};
    \def\s{1}
    \node[] at (\yEp*\s, -\xEp*\s) (s) {};

    \def\dsA{0.1}
    \def\dsN{0.101}
    \node[] at (\dsA*\xEp, \dsA*\yEp) (a1) {};
    \node[] at (\dsN*\xEp+\dsN*\yEp, \dsN*\yEp-\dsN*\xEp) (a2) {};
    \node[] at (\dsA*\yEp, -\dsA*\xEp) (a3) {};

    \draw[] (o.center) -- node[above] {$e'$} ++ (ep.center);
    \draw[] (o.center) -- node[above] {$e''$} ++ (epp.center);
    \draw[dashed] (o.center) -- (s.center);

    \draw[] (a1.center) -- (a2.center);
    \draw[] (a2.center) -- (a3.center);
  \end{tikzpicture}
  \caption{Reducing obtuseness check to orientation check.}\label{fig:obtuse}
\end{figure}
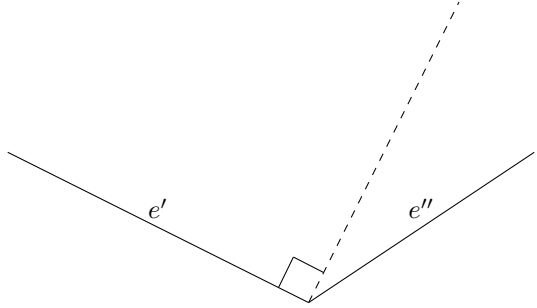

The latter considerations lead us to define $\ApproxBisector{e', e''}$
for some pair of consecutive edges $e'$, $e''$ such that $e'' =
\Next{e'}$ and $\narrowangle{e', e''}$ is obtuse. In particular, we
let $\ApproxBisector{e', e''}$ denote the line from $\Target{e'}$ ---
which is equal by definition to $\Origin{e''}$ --- in the direction
orthogonal to, and to the right of, the internal baseline segment that
connects $\Origin{e'}$ and $\Target{e''}$. For an illustration of this
construction cf. Figure~\ref{fig:approx}. We also define
  $\Obtuse{e', e''}$ as a predicate that is true iff $\Target{e''}$ is
  strictly right of the line from $\Target{e'}$ --- which we assumed
  is equal to $\Origin{e''}$ --- in the direction orthogonal to, and
  to the left of, $e'$. For an illustration of this construction
  cf. Figure~\ref{fig:obtuse}.
With these additional definitions Algorithm~\ref{algo:celestial}
formalizes the celestial walk.

\begin{algorithm}[ht]
\caption{Celestial Walk}\label{algo:celestial}
\begin{algorithmic}[1]
  \If {$p$ strictly right of $e$} \label{xx-a}
    \State $e \gets \Twin{e}$ \label{xx-b}
  \EndIf \label{xx-c}
  \State $e' \gets \Next{e}$ \label{xx-d}
  \While {$e \ne e'$} \label{xx-e}
    \If {$p$ strictly right of $e'$} \label{xx-f}
      \State $e'' \gets \Next{e'}$ \label{xx-g}
      \While {$\Obtuse{e', e''}$ and $p$ left of $\ApproxBisector{e', e''}$} \label{xx-h}
        \State $e', e'' \gets e'', \Next{e''}$ \label{xx-i}
      \EndWhile \label{xx-j}
      \State $e \gets \Twin{e'}$ \label{xx-k}
      \State $e' \gets \Next{e}$ \label{xx-l}
    \Else \label{xx-m}
      \State $e' \gets \Next{e'}$ \label{xx-n}
    \EndIf \label{xx-o}
  \EndWhile \label{xx-p}
  \State \Return{$\Face{e}$} \label{xx-q}
\end{algorithmic}
\end{algorithm}

In lines \ref{xx-a}-\ref{xx-c} we bootstrap the invariant that the
query point is always to the left of the current edge. In line
\ref{xx-d} we initialize a perimeter edge variable used for iterating
over the outline of the current face. In line \ref{xx-e} we enter the
outer loop. As can be seen by inspecting the code-paths inside the
loop body, the loop invariant ensures that the loop will terminate iff
the query point is inside the current face. In line \ref{xx-f} we
check if the query point lies strictly to the right of the perimeter
edge. If this is the case we know, due to the face convexity,
that the query point is outside the face. From that moment on
the only goal is to find an edge that allows us to walk to a
neighboring face. Therefore, in line \ref{xx-g} we introduce a second
perimeter edge variable that will be used to iterate over \emph{pairs}
$(e', e'')$ of consecutive edges rather than singletons. In line
\ref{xx-h} we state the negation of the condition that we are looking
for: either the internal angle of the face is acute\footnote{Note that
  the obtuseness check in line \ref{xx-h} can potentially be memorized
  by the mesh at the expense of one bit per half-edge.} or, otherwise,
the query point falls strictly to the right of the approximate
bisector. In line \ref{xx-i} we shift the pair of edges along the face
perimeter until said condition is reached. In line \ref{xx-k} we have
exited the inner loop so we know that $e'$ contains the edge that is
``sufficiently closest'' to the query point for us to continue the
walk to the neighboring face. Hence we re-assign $e$ and $e'$ and
drop back into the outer loop as though we would have restarted the
algorithm on this new edge.

The latter can be seen as a form of tail-recursion. Indeed, a
recursive formulation is trivially possible by transforming the entire
procedure into a function and substituting a recursive call at
line~\ref{xx-l}. We have not opted for the recursive formulation
because it would be slightly less efficient. The efficiency loss would
be due to the bootstrapping check in lines \ref{xx-a}-\ref{xx-c}
which, in the recursion, would be superfluous.

\section{Conclusion}

We have shown how our celestial distance measure allows to design
novel walking strategies. In particular we improved the applicability
of the popular visibility walk by refining it into what we call the
celestial walk. Our walk is almost as simple to implement yet is
guaranteed to terminate on arbitrary convex subdivisions.

If we assume that all obtuseness checks are pre-computed or memoized
(at the expense of 1 or 2 bits per half-edge respectively) we can make
the following observations w.r.t. the expected complexity of our walk.

The worst case for our walk materializes when all angles in the mesh
are obtuse, i.e. in a regular hexagonal mesh, in this case we require
2 orientation tests per visited half-edge (the same number as the
straight walk).

For triangulations a simple counting argument proves that there is at
most a fraction of $\frac{1}{3}$ of obtuse angles. This gives us a
worst case rough estimate of $\frac{4}{3}$ orientation tests per
visited half-edge (better than the straight walk and only slightly
worse than the visibility walk). In practice, we expect better
performance since the fraction of obtuse angles will be much less than
$\frac{1}{3}$ in triangulations that exhibit some regularity.

For meshes that are guaranteed to be triangular, minor optimizations
are possible by unrolling the loops of the general
algorithm. Furthermore, it is a good heuristic to minimize the number
of visited triangles by alternating clockwise and counterclockwise
winding order in order to balance out the systematic errors in
approximating the angular bisectors.

\subsection*{Future Work}

An obvious next question to consider is whether the walk can be
generalized to 3 dimensions. We have some reason to believe this is
the case.

In 2D as we progress according to the celestial walk we either move
forward in the direction of the query point or \emph{orient} ourselves
towards it.  
Similarly, in 3 dimensions we expect to either progress in Euclidian
sense or \emph{roll} in the direction of the query point.  
This intuition leads to developing a suitable non-increasing distance
measure quantifying this relationship.
In particular to the walking problem in tetrahedralization we find
that a combination of Euclidian distance as the first progression
criterion and two successive angles are sufficient to guarantee the
progress towards the tetrahedron containing the query
point~\cite{kuijper2017zig}.

Finally, we would also be interested in obtaining experimental results
regarding the trade-off that exists between steering (i.e.: choosing a
next edge that is favorably oriented to the query point) and driving
(i.e.: choosing a next edge \emph{quickly} based on easily computed
criteria). 

\small

\bibliographystyle{abbrv}
\bibliography{biblio}

\end{document}